\definecolor{gray}{rgb}{0.35,0.35,0.35}
\definecolor{blue}{rgb}{0,0,1}
\definecolor{red}{rgb}{1,0,0}
\definecolor{orange}{rgb}{0.75, 0.4, 0}
\definecolor{green}{rgb}{0.0, 0.5, 0.0}
\newtheorem*{proposition*}{proposition}
\newcommand{\ignore}[1]{}
 \def\A{\mathcal{A}}
\newcommand{\Cpp}{C\raise.08ex\hbox{\tt ++}\xspace}
\newtheorem{lemma}{Lemma}
\newtheorem{theorem}{Theorem}
\newtheorem{corollary}{Corollary}
\theoremstyle{definition}
\theoremstyle{plain}
\def\0{\bm{0}}
\def\thmhead@plain#1#2#3{%
  \thmname{#1}\thmnumber{\@ifnotempty{#1}{ }\@upn{#2}}%
  \thmnote{ {\the\thm@notefont#3}}}
\let\thmhead\thmhead@plain
\newcommand*{\boldcheckmark}{%
  \textpdfrender{
    TextRenderingMode=FillStroke,
    LineWidth=.5pt, %
  }{\checkmark}%
}
\newcommand{\poslit}{\ensuremath{R^+}}
\newcommand{\neglit}{\ensuremath{R^-}}
\newcommand{\src}[1] {\ensuremath{s(#1)}}
\newcommand{\trg}[1] {\ensuremath{t(#1)}}
\newcommand{\pth}[1] {\ensuremath{\pi_{#1}}}
\newcommand{\pthdef}[2] {\ensuremath{\pi_{#1}(#2)}}
\newcommand{\gadg}{\ensuremath{\Xi}}
\newcommand{\mycomment}[1]{%
}%
\title[Refined Hardness of Distance-Optimal Multi-Agent Path Finding]{Refined Hardness of \\Distance-Optimal Multi-Agent Path Finding}
\author{Tzvika Geft}
\affiliation{
  \institution{Tel Aviv University}
  \city{Tel Aviv}
  \country{Israel}}
\email{zvigreg@mail.tau.ac.il}
\author{Dan Halperin}
\affiliation{
  \institution{Tel Aviv University}
  \city{Tel Aviv}
  \country{Israel}}
\email{danha@post.tau.ac.il}
\begin{abstract}
We study the computational complexity of multi-agent path finding (MAPF).
Given a graph $G$ and a set of agents, each having a start and target vertex, the goal is to find collision-free paths minimizing the total distance traveled.
To better understand the source of difficulty of the problem, we aim to study the simplest and least constrained graph class for which it remains hard.
To this end, we restrict $G$ to be a 2D grid, which is a ubiquitous abstraction, as it conveniently allows for modeling well-structured environments (e.g., warehouses).
Previous hardness results considered highly constrained 2D grids having only one vertex unoccupied by an agent, while
the most restricted hardness result that allowed multiple empty vertices was for (non-grid) planar graphs.
We therefore refine previous results by simultaneously considering both 2D grids and multiple empty vertices.
We show that even in this case distance-optimal MAPF remains NP-hard, which settles an open problem posed by Banfi et al.~\cite{DBLP:journals/ral/BanfiBA17}.
We present a reduction directly from 3-SAT using simple gadgets, making our proof arguably more informative than previous work in terms of potential progress towards positive results.
Furthermore, our reduction is the first linear one for the case where $G$ is planar, appearing nearly four decades after the first related result.
This allows us to go a step further and exploit the Exponential Time Hypothesis (ETH) to obtain an exponential lower bound for the running time of the problem.
Finally, as a stepping stone towards our main results, we prove the NP-hardness of the monotone case, in which agents move one by one with no intermediate stops.
\end{abstract}
\keywords{Multi-Agent Path Finding; Multi-Robot Motion Planning; Grid Graphs; NP-Hardness; SAT}
\newcommand{\BibTeX}{\rm B\kern-.05em{\sc i\kern-.025em b}\kern-.08em\TeX}
\begin{document}

\pagestyle{fancy}
\fancyhead{}

\maketitle

\begin{table*}[t]
\caption{Comparison of previous hardness proofs for distance-optimal MAPF. Exact definitions for parallel and sequential motion are given in Section~\ref{sec:terminology}. We define the term \textit{linear reduction} in Section~\ref{sec:ETH}.}
\label{tab:comp}
\centering

\begin{tabular}{lcccccc}
\midrule
 Paper & Planar & Grid subgraph & $>1$ Empty vertex & Parallel & Sequential & Linear reduction  \\
 \midrule
Goldreich~\cite{Goldreich84}                            &   &   &   &   & \checkmark & \checkmark \\
Ratner and   Warmuth~\cite{ratner1990n2}                & \checkmark & \checkmark &   &   & \checkmark &   \\
Demaine and   Rudoy~\cite{DBLP:journals/tcs/DemaineR18} & \checkmark & \checkmark &   &   & \checkmark &   \\
Yu and   LaValle~\cite{YuGraphs}                        &   &   &   & \checkmark &   &   \\
Yu~\cite{YuPlanar}                                      & \checkmark &   & \checkmark & \checkmark & \checkmark &   \\

\midrule
\textbf{this work}                                              &\boldcheckmark&\boldcheckmark&\boldcheckmark&\boldcheckmark&\boldcheckmark&\boldcheckmark\\
     
\end{tabular}
\end{table*}

\section{Introduction}
We are witnessing the rapid development and adaptation of autonomous multi-robot systems in a wide variety of application domains.
Such systems are deployed in warehouse logistics~\cite{app:warehouses, app:amazonkiva}, rail traffic scheduling~\cite{app:trains}, autonomous aircraft towing~\cite{app:towing}, and many more~\cite{DBLP:conf/socs/FelnerSSBGSSWS17}.
Successfully deploying multi-robot systems requires algorithms for efficiently planning collision-free motions, which is an ever-growing research field.

We study the motion planning problem of multiple agents operating on a grid.
Grid graphs are commonly employed in multi-robot motion planning as a simple means of environment discretization.
There is also particular interest in structured environments, such as warehouses, which are often already grid-like by design (see, e.g.,~\cite{app:warehouses, ma2017multi}).
In this work we are concerned with intractability of optimizing the sum of distance traveled by agents on grids.
Minimizing distance has been widely studied in motion planning~\cite{DBLP:conf/focs/CannyR87, DBLP:conf/rss/SoloveyYZH15, DBLP:conf/cccg/KirkpatrickL16, DBLP:journals/arobots/ShomeSDHB20}.
The objective has also been studied in MAPF grid domains~\cite{yu2018constant, DBLP:journals/arobots/Yu20, DBLP:journals/ral/WangR20}, including being one of the objectives of the SoCG~2021 Challenge~\cite{fekete2021computing}.

The goal of this work is to make the hardness analysis for distance-optimal MAPF more comprehensible and applicable.
Our driving force closely ties in with future research directions presented in a recent AAMAS blue-sky paper~\cite{DBLP:conf/atal/SalzmanS20}.
Along with others~\cite{DBLP:conf/socs/FelnerSSBGSSWS17, gordon2021revisiting}, it advocates the need for better understanding the hardness of MAPF.

\textbf{Previous work.}
The hardness of \textit{time}-optimal MAPF has been well-studied over the previous decade~\cite{DBLP:conf/aaai/Surynek10, YuGraphs, DBLP:conf/aaai/MaTSKK16, YuPlanar}, including results for grid graphs~\cite{DBLP:journals/ral/BanfiBA17, DBLP:journals/siamcomp/DemaineFKMS19}.
Previous work on \textit{distance}-optimal MAPF dates back at least as far as the 1980s, but has not treated modern MAPF formulations as extensively.
We now outline results on the intractability of distance-optimal MAPF and compare them based on a few parameters: the type of graph, the number of empty vertices, i.e., vertices not containing an agent, and whether parallel motions are allowed.

The roots of the problem can be traced back to the classic 15-puzzle~\cite{johnson1879notes15puzz, wilson1974graph}, which can viewed as moving 15 agents on a 16-vertex grid graph.
In 1984 Goldreich~\cite{Goldreich84, DBLP:books/sp/goldreich2011/Goldreich11} presented the first NP-hardness result for the generalized puzzle, which consisted of a general graph with one empty vertex.
Ratner and Warmuth~\cite{ratner1990n2} refined his result to hold for the 2D-grid, known as the $(n^2-1)$-puzzle, using a more complicated reduction from a special SAT variant.
A simpler hardness reduction for the same problem from Rectilinear Steiner Tree~\cite{garey1977rectilinear} was recently shown by Demaine and Rudoy~\cite{DBLP:journals/tcs/DemaineR18}.
The results so far all consider a very constrained case where only a single vertex is unoccupied by an agent and only one agent can move at a time.

Meanwhile, the problem has evolved and modern MAPF formulations started allowing multiple agents to move together, including simultaneous rotation of agents along fully occupied cycles~\cite{DBLP:conf/wafr/YuR14}.
Accounting for this, Yu and LaValle~\cite{YuGraphs} introduced these motions into the intractability analysis.
They showed that the problem remains NP-hard for parallel motions, including rotations, for general graphs where all vertices are occupied.
Yu~\cite{YuPlanar} later refined these results to hold for planar graphs with more than one empty vertex (under the same parallel motions).
In Table~\ref{tab:comp} we give a concise comparison of all the previous proofs.

The problem formulation by Yu~\cite{YuPlanar} can be considered as the one closest to the prevalent formulation of MAPF on grids.
Nevertheless, Yu's intractability result does not imply hardness for grids since planar graphs are more general, and his construction cannot be readily adapted to grids.
Indeed, the grid case has been subsequently posed as an open problem by Banfi et al.~\cite{DBLP:journals/ral/BanfiBA17}, who showed the NP-hardness of time-optimal MAPF on 2D grids.

\textbf{Contribution.}
We settle the latter question by showing that distance-optimal MAPF remains hard on the 2D grid even with multiple empty vertices.
By considering the simplest graph environment with more movement freedom, our result essentially establishes hardness for the easiest variant of the problem thus far.
Since we show hardness for a more specific class of graphs than the previous result by Yu~\cite{YuPlanar}, our proof applies to their case as well.
Our hardness proof is via a direct reduction from 3-SAT using simple gadgets.
Its simplicity is highlighted by the fact that it is a linear reduction, which stands in contrast to all previous proofs (except for Goldreich's~\cite{Goldreich84}, which uses a non-planar graph with only one empty vertex).

Although our result is negative in nature, we argue that its relative simplicity compared to related proofs has important practical implications that are in-line with current research goals.
To this end, we discuss the main benefits of simple hardness proofs.

First, they make it easier to highlight the parameters that make the problem hard.
By closely identifying such parameters, one can evaluate different algorithms with respect to these parameters and improve algorithm selection~\cite{DBLP:conf/aips/KaduriBS20}. %
Indeed, for the related time objectives, it has been noted that there is no algorithm that dominates all the others~\cite{DBLP:conf/socs/FelnerSSBGSSWS17}, motivating such comparisons.
Furthermore, identifying such parameters can pave the way towards positive results using parameterized complexity~\cite{cygan2015parameterized}.
As a modern approach for tackling NP-hard problems, parameterized complexity been recently raised as a potential research direction for MAPF~\cite{DBLP:conf/atal/SalzmanS20}.
Under this approach, we aim for exact yet efficient algorithms that are exponential only in the size of a
fixed parameter while being polynomial in the size of the input.
Lastly, capturing the hardness of the problem in an easy to grasp way can provide algorithm designers an intuitive basis for better solutions.
We discuss observations in this spirit based on our hardness construction in the conclusion.

Of additional practical significance is our establishment of a concrete lower bound for distance-optimal MAPF.
Such bounds provide a crucial indication on whether running times of algorithms can be improved.
While NP-hardness results provide evidence that computational problems are unlikely to be solvable in polynomial time, the underlying complexity assumption, namely P != NP, does not give any concrete lower
time bounds.
Indeed, many NP-hard problems differ widely in hardness in practice.
Therefore, a stronger assumption is needed for more meaningful results.

A nowadays common assumption for this purpose is the Exponential Time Hypothesis (ETH), introduced by Impagliazzo and Paturi~\cite{DBLP:journals/jcss/ImpagliazzoP01}.
Roughly speaking, it conjectures that 3-SAT cannot be solved in subexponential time $2^{o(N)}$, where $N$ is the number of variables.
The ETH has far reaching consequences (see, e.g., the survey~\cite{DBLP:journals/eatcs/LokshtanovMS11})
leading to increased adoption in robotics and artificial intelligence~\cite{DBLP:journals/jair/BackstromJ17, DBLP:conf/aaai/EibenGKY18, DBLP:journals/amai/AghighiBJS16}.

Obtaining an exponential lower bound using the ETH requires more fine-grained hardness reductions that do not blow-up the size of the resulting instance (which is roughly the number of agents in our case).
Since our reduction has a linear number of agents, unlike previous ones for the planar case, we are able to obtain the first exponential lower bound for distance-optimal MAPF.

\textbf{Organization.}
As a stepping stone, we show the hardness of a more restricted problem version, called monotone MAPF, in which agents move one by one to their targets and each agent is only allowed to move once.
The monotone version of the problem arises in the context of object \textit{rearrangement}, in which a robot moves set of objects one by one from a given configuration to another~\cite{wang2021uniform}.

In Section~\ref{sec:terminology} we introduce our terminology and problem definition.
In Section~\ref{sec:ETH} we give background on the ETH and how we can use it to obtain lower bounds.
In Section~\ref{sec:monotone} we show the hardness of monotone distance-optimal MAPF, which is adapted to the general (non-monotone) case in Section~\ref{sec:non-monotone}.

\section{Terminology} \label{sec:terminology}
We now define distance-optimal MAPF.
We are given an undirected graph $G(V, E)$ and a set $R$ of $n$ agents.
Each agent $r \in R$ has a start vertex $\src{r} \in V$ and goal vertex $\trg{r} \in V$.
We define a {\em trajectory (timed path)} for an agent $r$ as a sequence
$\pth{r}: \mathbb {N} \to V$ where $\mathbb N$ is the set of non-negative integers representing time steps.
A feasible $\pth{r}$ must be a sequence of vertices that connects $\src{r}$ and $\trg{r}$: 
\begin{itemize}
    \item  $\pthdef{r}{0} = \src{r}$
    \item  $\exists T_i \in \mathbb N$, s.t. $\forall \tau \ge T_i, \pthdef{r}{\tau} = \trg{r}$
    \item $\forall \tau > 0$, $ \pthdef{r}{ \tau-1} =  \pthdef{r}{\tau}$ or $(\pthdef{r}{\tau-1},  \pthdef{r}{\tau}) \in E$
\end{itemize}
We call the set of trajectories for all agents $\{\pth{r}\}_{r\in R}$ a \emph{motion plan}.
We call the motion plan {\em collision-free} if and only if the agents do not simultaneously occupy the same vertex or edge.
That is, $\forall r, r' \in R$ s.t. $r \neq r'$, $\pth{r}, \pth{r'}$ must satisfy the following: 
\begin{itemize}
    \item $\forall \tau \ge 0$, $\pthdef{r}{\tau} \neq \pthdef{r'}{\tau}$
    \item $\forall \tau > 0$, $(\pthdef{r}{\tau-1},  \pthdef{r}{\tau}) \ne (\pthdef{r'}{\tau},  \pthdef{r'}{\tau-1})$. 
\end{itemize}

\textbf{Monotone motion plan.}
The \textit{active interval} of an agent $r$ in a motion plan, denoted by $I_r$, is the interval from the first time $r$ leaves $\src{r}$ to the last time $r$ reaches $\trg{r}$, i.e.,
\[I_r \coloneqq [\min_{\tau \in \mathbb N} \pthdef{r}{\tau+1} \neq \src{r}, \max_{\tau \in \mathbb N} \pthdef{r}{\tau-1} \neq \trg{r}]\]
If the active intervals $\{I_{r}\}_{r\in R}$ of a motion plan are pairwise disjoint then we call it a \textit{monotone} motion plan, i.e., the agents move one by one.

\begin{sloppypar}
\textbf{Paths and distance cost.}
We define the \emph{path} of an agent $r$ in a motion plan, denoted by $P(r)$, to be its path in $G$ in the regular graph theoretic sense, i.e., \pth{r} with consecutive appearances of the same vertex $v$ replaced by a single occurrence of $v$.
The \emph{length} of a path is the number of edges in the path.
The \emph{distance cost} of a motion plan is the sum of the lengths of the paths of the agents in $R$, i.e., the total distance traveled by the agents.
For an instance ${M \coloneqq (G,R,
\{\src{r},\trg{r}\}_{r\in R})}$, 
we denote by $d^*(M)$ the cost of a motion plan where each agent takes the shortest possible path, i.e., this is the optimistic lower bound for the distance cost.
Formally,
\mbox{$d^*(M)= \sum_{r\in R} d(\src{r},\trg{r})$} where $d(u,v)$ is the distance, namely, the length of a shortest path, between $u$ and $v$ in $G$.
\end{sloppypar}

We can now define the two decision problems for which we prove NP-hardness:

\medskip
\noindent\textbf{Distance-Optimal MAPF}:
Given $G, R, \{\src{r},\trg{r}\}_{r\in R}$ 
as defined above and an integer $k \in \mathbb N$, is there a motion plan $\{\pth{r}\}_{r\in R}$ that is collision-free and has a distance cost of at most $k$?
\medskip

\medskip
\noindent\textbf{Monotone Distance-Optimal MAPF}:
Same as above, except that the motion plan needs to be monotone.
\medskip

\textbf{Remark: Parallel, sequential, and monotone plans.} %
The above (non-monotone) formulation allows the strongest notion of parallel synchronized motions.
For example, notice that an agent can move into a vertex that is just being left by another agent, i.e., agents can move like a train.
Specifically, this also allows agents to synchronously rotate along a fully occupied cycle.
In \emph{sequential} MAPF (classically known as pebble motion on graphs), the motion plan can only have one agent moving at each time step.
Note that a monotone motion plan is also sequential, but that opposite is not true, i.e., a plan can be sequential but not monotone.

\section{Lower Bounds Using the Exponential Time Hypothesis} \label{sec:ETH}
When designing or improving an algorithm for a problem, a natural question is, "What is the fastest possible algorithm for the problem?"
A common way of addressing the problem is using the theory of NP-hardness, which uses the assumption that P != NP.
Under this assumption, it is widely believed that NP-hard problems such as 3-SAT cannot be solvable in polynomial time.
Unfortunately, the assumption is too weak to allow us to conclude any concrete lower bounds.
Therefore, a stronger assumption called the Exponential Time Hypothesis (ETH) was introduced by Impagliazzo and Paturi~\cite{DBLP:journals/jcss/ImpagliazzoP01}.
In essence, it relies on research barriers as evidence for the nonexistence of a sub-exponential algorithm for 3-SAT.
With this stronger assumption, the ETH has enabled to delineate NP-complete problems based on concrete time bounds, which is more fine grained than the usual classification into complexity classes.

We now state the version of the hypothesis that we will use.
The original ETH states the lower bound in terms of the number of variables $N$ of the 3-SAT formula.
However, the output instances of hardness reductions usually depend on the size of the formula, i.e., the number of literals, which could be as large as $O(N^3)$.
Therefore, through the use of the Sparsification Lemma by Impagliazzo et al.~\cite{DBLP:journals/jcss/ImpagliazzoPZ01-sparse} it was shown that the original ETH also holds with respect to the number of clauses, which we state as follows:

\medskip
\noindent\textbf{Exponential Time Hypothesis} \cite{DBLP:journals/jcss/ImpagliazzoP01, DBLP:journals/jcss/ImpagliazzoPZ01-sparse}.
There is no algorithm solving every instance of 3-SAT with $N$ variables and $M$ clauses in time $2^{o(N+M)}$.
\medskip

Using this hypothesis we are able to obtain concrete lower bounds as follows.
First, for convenience, we note that for a 3-SAT formula $\phi$ we have $|\phi| = O(N+M)$, where $|\phi|$ is the size of the formula.
Consider a \textit{linear} reduction from 3-SAT to some problem $A$, i.e., a polynomial-time algorithm that takes a 3-SAT formula $\phi$ and outputs an equivalent instance $x$, whose size, $|x|$, is bounded by $O(|\phi|)$.
Then, if $A$ had an algorithm with a running time of $2^{o(|x|)}$, we could use it, after applying the reduction, to solve 3-SAT in time $2^{o(|\phi|)}=2^{o(N+M)}$.
Therefore, the existence of a linear reduction from 3-SAT to $A$ implies the nonexistence of a $2^{o(|x|)}$ algorithm for $A$ under the ETH.
We will present linear reductions in order to make the same claim for distance-optimal MAPF.

In general, a reduction from 3-SAT to $A$ outputting an instance of size $O(g(|\phi|))$ would exclude an $2^{o(f(|x|))}$-time algorithm for $A$, where $f$ is the inverse of $g$.
Therefore, reductions aiming to obtain lower bounds using ETH should keep the \textit{blow up} of the instance, represented by the function $g$, as close to linear as possible.

\begin{figure*}[!t]
\centering
\includegraphics[width=1\textwidth]{}
\caption{The instance $M$ for the formula $\phi = (\overline{x} \lor \overline{y} \lor z) \land (x \lor \overline{y} \lor  z) \land (x \lor y \lor \overline{z})$.
Obstacles appear in gray.
The leftmost variable gadget's entrance and exit are marked by a cross and a dot, respectively.
The start and target positions are the filled  and unfilled colored squares, respectively. Positive and negative literal agents (and their target positions) are green and red, respectively.
Literal agents are labeled with unique indices in order to distinguish between appearances of the same literal.
Clause agents and their target positions are cyan.
}
\label{fig:monotone}
\end{figure*}

\begin{figure*}[!t]
\centering
\includegraphics[width=1\textwidth]{}
\caption{
Three stages of an optimal monotone motion plan for the instance in Figure~\ref{fig:monotone}.
The plan corresponds to the assignment $x=T, y=F, z=T$, for which
$P$ is shown in blue.
We use blue text labels to highlight target positions of agents that have moved in each stage.
In the first stage (top) agents in \neglit{} move, in the following order: $\overline{z}_1, y_1, \overline{x}_1$ (namely $\overline{z}_1$ moves first, then $y_1$ and finally $\overline{z}_1$). 
In the next stage (middle) the clause agents move: $c_1, c_2, c_3$.
In the final stage (bottom), the agents in \poslit{} move: $z_2, z_1, \overline{y}_2, \overline{y}_1, x_2, x_1$.}
\label{fig:monotone_plan}
\end{figure*}

\section{Monotone Distance-Optimal MAPF} \label{sec:monotone}
In this section we prove that Monotone Distance-Optimal MAPF is NP-hard for grid graphs and present an exponential lower bound under the ETH.
We remark that the feasibility problem, i.e., simply deciding whether a monotone (collision-free) motion plan exists, was shown to be NP-hard~\cite{geft2021complexity}.
Therefore, we refine the problem definition by assuming here that some monotone motion plan exists and focus only on the hardness of optimization.

Before presenting the reduction, we provide an intuition into the hardness of the problem.
Recall that once an agent $r$ reaches its target, it may no longer move.
As a result, when $r$ is at its target, agents that move after it might be forced to detour around $r$ and take a longer path.
Hence, an algorithm for the problem needs to carefully decide which agent to move next, so that shortest paths that will be needed (for agents moving later) are not blocked.
The decision is further complicated by the fact that before an agent $r$ can move, some agents may need to move first to clear a path for $r$.

We now present a reduction from 3-SAT, the problem of
deciding satisfiability of a formula in conjunctive normal form with three literals in each clause.
The reduction is an adaptation of a simplified version of the hardness proof for monotone MAPF~\cite{geft2021complexity}.
Given a 3-SAT formula $\phi$, we construct a corresponding monotone MAPF instance $M$ that has a motion plan with the lowest attainable distance cost, $d^*(M)$, if and only if $\phi$ is satisfiable.

An example of the construction is shown in Figure~\ref{fig:monotone}, which should be followed throughout the description.
We present the figure as a planar workspace composed of unit grid cells, which are dual to the vertices on a grid graph.
We therefore occasionally use the term \textit{cell} to refer to a vertex in the graph.

The construction is a grid $G$ with three rows.
It contains two types of agents: \emph{literal agents}, each corresponding to an appearance of a literal in $\phi$, and \emph{clause agents}, each corresponding to a clause of $\phi$.
The clause agents are initially located in a rectangular "room" at the very left of the construction (see cyan squares in Figure~\ref{fig:monotone}).
To their right is a series of \textit{variable gadgets}, which the clause agents have to \textit{traverse} though, i.e., enter and exit in the general rightward direction.
Each gadget has an \textit{entrance} on the left and an \textit{exit} on the right, which are marked by a cross and a dot, respectively, in Figure~\ref{fig:monotone}.
We use obstacle cells to make entrances, exits, and other passages only one row/column wide.

The literal agents' start positions are located in the variable gadgets.
Each variable gadget initially contains literal agents of a single variable and has two optimal-length paths for traversing it.
The \textit{top path} (i.e., the one going right along the grid's top row, then down) initially contains positive literal agents, and the \textit{bottom path} path initially contains negative literal agents.
Note that there are some empty cells in variable gadgets in Figure~\ref{fig:monotone} that are not necessary for the current construction, but will later play a role in Section~\ref{sec:non-monotone} (and are kept for commonality between figures).

The literal agents' target positions are located in \emph{clause gadgets}.
Each clause gadget's middle row contains the target positions of the literals in the clause that the gadget represents.
The gadget also contains a target of one clause agent on its top row.
We include an empty column at the right of each clause gadget to ensure accessibility of the latter target.
Specifically, the empty columns allow each clause agent $c$ to reach its target position even if all the targets in the respective clause gadget are occupied (note that in this case $c$'s path will be longer than the shortest possible path).

The gadgets are arranged from left to right so that first we have variable gadgets and then clause gadgets. The order of gadgets of the same type and start/target positions within a gadget are arbitrary.

\begin{figure*}[!t]
\centering
\includegraphics[width=1\textwidth]{}
\caption{
The instance $M'$ modified from $M$ in Figure~\ref{fig:monotone}.
Blocker agents (and their target positions) are shown in orange.}
\label{fig:non-monotone}
\end{figure*}

It is easy to verify that there exists a monotone motion for plan for $M$. Simply repeat the following for each variable gadget \gadg{}, going from the rightmost to the leftmost gadget: first move agents that are on \gadg{}'s top path in right to left order and then do the same for \gadg{}'s bottom path.
At this point all literal agents are at their targets.
Therefore, from now on there always exists a clause agent that may be moved to its target, until all have reaches their targets.

The following theorem proves the correctness of the construction.
\begin{theorem} \label{thm:monotone}
$M$ has a monotone motion plan with a distance cost of $d^*(M)$ if and only if $\phi$ is satisfiable.
\end{theorem}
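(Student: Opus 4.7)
The plan is to prove both directions of the biconditional.

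For the ``if'' direction, assume $\phi$ has a satisfying assignment $\alpha$. I construct a monotone motion plan of cost $d^*(M)$ in three stages, mirroring the one illustrated in Figure~\ref{fig:monotone_plan}. In Stage~1 I move every literal agent whose literal is evaluated to \false by $\alpha$. In each variable gadget all such agents lie on a single one of the two length-optimal traversal paths (the top path if the variable is \false, the bottom path if \true), so by sweeping right-to-left within each gadget and processing the gadgets from the rightmost to the leftmost I can send each such agent along a shortest path to its target in the appropriate clause gadget. In Stage~2 I move the clause agents one by one: each variable gadget now has a cleared path (the one vacated in Stage~1), and I route each clause agent along shortest paths through these cleared paths until it reaches its top-row target inside its own clause gadget. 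Because $\alpha$ satisfies every clause, in each clause gadget at least one middle-row literal target (corresponding to a \true literal whose agent has not yet moved) is still unoccupied, so the clause agent can enter its gadget along a shortest route without needing to detour through the right-hand empty column. In Stage~3 I move the remaining \true-literal agents; the agents already placed in Stages~1--2 occupy targets that lie off the shortest routes of these remaining agents, so each can take a shortest path. Summing up, every agent traverses a shortest path, so the plan has cost exactly $d^*(M)$.

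For the ``only if'' direction, suppose $\Pi$ is a monotone motion plan for $M$ of cost $d^*(M)$. Then each agent's path in $\Pi$ is necessarily a shortest path in $G$. I extract an assignment $\alpha$ from $\Pi$ by setting, for each variable $x$, $\alpha(x) = \true$ iff every negative-literal agent of $x$ moves before any clause agent first traverses $x$'s variable gadget in $\Pi$, and $\alpha(x) = \false$ otherwise. The core claim is that $\alpha$ satisfies $\phi$, and two structural observations drive the argument. Observation~(a): in each variable gadget, clause agents can only traverse a path that has been vacated by its resident literal agents, because the shortest-path constraint plus monotonicity force every literal agent initially sitting on that path to depart (and, in fact, exit the variable gadget to the right) before a clause agent can traverse the gadget at optimal length. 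Observation~(b): for a clause agent to reach its top-row target along a shortest path rather than detouring through the empty column on the right side of its clause gadget, at least one of the three middle-row literal-target cells of the gadget must be vacant at the moment the clause agent enters. Observation~(a) makes $\alpha$ consistent across all clause agents that traverse the same variable gadget, and observation~(b) yields a \true literal for every clause under $\alpha$.

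I expect the main obstacle to be the ``only if'' direction, and specifically excluding subtle interleavings of moves in $\Pi$ that might still achieve cost $d^*(M)$ while blurring the clean three-stage template used in the ``if'' direction. The bulk of the technical work will be a careful case analysis that, for each agent, pins down which cells of $G$ may be occupied at the moment the agent starts to move without forcing it onto a non-shortest path; this should yield the blocking characterization behind observations~(a) and~(b) and thereby force every clause to contain a literal that is \true under $\alpha$.
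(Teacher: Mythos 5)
Your proposal matches the paper's proof in both directions: the same three-stage plan (false-literal agents first, then clause agents, then true-literal agents) for sufficiency, and for necessity the same two key facts --- that clause agents can only traverse a variable gadget along a side that its resident literal agents have fully vacated, and that each clause agent needs a vacant middle-row target to avoid a costly detour. The only cosmetic difference is that you extract the assignment per variable from the first clause-agent traversal of its gadget, whereas the paper defines it globally via the set of literal agents that move after the last clause agent; both lead to the same argument.
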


\begin{proof}
  Assume that $\phi$ has a satisfying assignment $\mathcal{A}$.
  Let \poslit{} (resp. \neglit{}) be the set of agents corresponding to literals that evaluate to true (resp. false) according to $\mathcal{A}$.
  Let $P$ be the shortest path from the entrance of leftmost variable gadget to the exit of the rightmost variable gadget that passes through all the start positions of \neglit{}; see Figure~\ref{fig:monotone_plan}.
  Observe that for each variable gadget, \neglit{} contains agents that are all either on the gadget's top path or bottom path. This means that $P$ exists and that it is $x$-monotone.
  
  We specify a monotone motion plan in which all agents move along $P$ while traversing variable gadgets.
  The motion plan has three stages, which are illustrated in Figure~\ref{fig:monotone_plan}. In each stage a group of agents move, starting with \neglit{}, then the clause agents, and finally \poslit{}.
  First, agents in \neglit{} move in right to left order along $P$, which guarantees no collisions between literal agents.
  Observe that each agent in \neglit{} can achieve the shortest path to its target, initially guided by $P$.
  Next, the clause agents move in the natural order that allows each of them to leave their initial room using the shortest path with no collisions.
  We have the following properties at this point: $P$ contains only empty cells and each clause gadget's middle row must also contain an unoccupied target of an agent in \poslit{}.
  The latter holds because $\mathcal{A}$ satisfies $\phi$ and the agents of \poslit{} have not yet moved.
  Therefore, each clause agent can also take an optimal path.
  Finally, \poslit{} can move optimally, guided by $P$, similarly to \neglit{}.

  For the other direction, we assume that there is a monotone motion plan for $M$ with a distance cost of $d^*(M)$ and show that $\phi$ has a satisfying assignment.
  Let \poslit{} denote the agents that move after the last clause agent moves.
  For any variable $\alpha \in \phi$, \poslit{} cannot contain literal agents corresponding to both $\alpha$ and $\overline{\alpha}$, since then clause agents would not be able to reach their target positions.
  Therefore, we can define an assignment $\mathcal{A}$ in which the literals corresponding to \poslit{} evaluate to true.
  (If a variable does not have literals in \poslit{}, then it can be assigned an arbitrary value.)
  
  Let $C$ be a clause in $\phi$ and let $c$ be the corresponding clause agent, i.e., $c$ has to go to $C$'s clause gadget.
  There must be a target vertex $v$ in the middle row of $C$'s clause gadget that is unoccupied when $c$ moves.
  Such a vertex $v$ must exist in order for $c$ to have the shortest possible path to its target $\trg{c}$ during its turn to move.
  Therefore, the literal agent $r$, with $\trg{r}=v$ must be in \poslit{} by definition, i.e., it must move after $r$.
  Hence, the literal corresponding to $r$ evaluates to true by $\mathcal{A}$, which means that $C$ is satisfied and we are done.
\end{proof}

It is easy to verify that the number of agents in $M$ as well as the size of the resulting graph is linear in $|\phi|$. Therefore, we conclude the following.

\begin{corollary}
Monotone Distance-Optimal MAPF is NP-hard and cannot be solved in sub-exponential time $2^{o(n)}$ or $2^{o(|V|)}$ unless ETH fails, even for a grid graph $G=(V, E)$ with 3 rows, where $n$ is the number of agents.
\end{corollary}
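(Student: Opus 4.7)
The plan is to deduce the corollary directly from Theorem~\ref{thm:monotone} together with the ETH framework set up in Section~\ref{sec:ETH}. Theorem~\ref{thm:monotone} already establishes that $\phi$ is satisfiable if and only if the constructed instance $M$ admits a monotone motion plan of cost exactly $d^*(M)$. Since $d^*(M) = \sum_{r \in R} d(\src{r},\trg{r})$ can be computed in polynomial time (sum of shortest path distances in a grid), setting $k = d^*(M)$ yields a polynomial-time many-one reduction from 3-SAT to the decision version of Monotone Distance-Optimal MAPF, which gives NP-hardness.

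For the ETH bound, I would verify the two linearity conditions that the argument of Section~\ref{sec:ETH} requires. First, count the agents: there is exactly one literal agent per literal appearance (at most $3M$ of them, where $M$ is the number of clauses of $\phi$) and exactly one clause agent per clause. Hence $n \le 4M = O(|\phi|)$. Second, count the vertices: each variable gadget and each clause gadget has constant height and width that is linear in the number of literal/clause agents it holds, the clause agents' initial room has width $O(M)$ and constant height, and the gadgets are placed side by side. Thus the total grid has three rows and $O(|\phi|)$ columns, giving $|V| = O(|\phi|) = O(N+M)$.

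With both $n$ and $|V|$ linear in $|\phi|$, the reduction is a linear reduction in the sense defined in Section~\ref{sec:ETH}. Therefore any algorithm solving Monotone Distance-Optimal MAPF in time $2^{o(n)}$ or $2^{o(|V|)}$ would yield a $2^{o(N+M)}$ algorithm for 3-SAT, contradicting the ETH. I would close by noting that the construction in Figure~\ref{fig:monotone} uses exactly three rows by design, so the hardness already holds in this extremely restricted grid class.

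I expect no real obstacle here beyond a careful bookkeeping step confirming the linear blow-up; the main work has already been done in Theorem~\ref{thm:monotone}, and the remaining argument is a routine application of the ETH transfer described in Section~\ref{sec:ETH}.
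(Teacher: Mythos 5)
Your proposal is correct and follows essentially the same route as the paper: the paper likewise derives the corollary directly from Theorem~\ref{thm:monotone} by observing that the number of agents and the size of the grid are linear in $|\phi|$ and then invoking the ETH transfer argument of Section~\ref{sec:ETH}. The only difference is that you spell out the bookkeeping ($n \le 4M$, three rows, $O(|\phi|)$ columns) that the paper leaves as "easy to verify."
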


\begin{figure*}[!t]
\centering
\includegraphics[width=1\textwidth]{}
\caption{
The path $P$ (blue) corresponding to the assignment $x=T, y=F, z=T$ along with a snapshot after the first stage of the motion plan for $M'$.}
\label{fig:non-monotone-stage1}
\end{figure*}

\section{General Distance-Optimal MAPF} \label{sec:non-monotone}
Our previous construction no longer holds once non-monotone motions are allowed.
Since agents are not constrained by time, they can make intermediate stops, which adds a lot of possibilities to the motion plan.
The main challenge is that literal agents can ``cheat'' by making intermediate stops in variable gadgets along their way.
For example, in Figure~\ref{fig:monotone} $y_1$ could position itself in the cell to the left of $z_1$,
thereby creating a path through $y$'s variable gadget that does not enforce an assignment to $y$.\footnote{Eliminating free cells in the two paths in $z$'s variable gadget does not solve the problem.
Observe that it is possible for both positive and negative literal agents to leave the gadget before any clause agent moves. If this happens, it would create space for the undesirable intermediate stops described.}
Therefore, we introduce \textit{blockers}, which are new agents that prevent undesirable intermediate stops, and prove that general distance-optimal MAPF is NP-hard on grid graphs.

As before, for a 3-SAT formula $\phi$, we construct a distance-optimal MAPF instance $M'$ that has a motion plan with a distance cost of $d^*(M')$ if and only if $\phi$ is satisfiable.
An example of the new instance $M'$ is illustrated in Figure~\ref{fig:non-monotone}.
In general, $M'$ is the same as $M$ from Section~\ref{sec:monotone} except for the following change:
Each variable gadget now has a blocker agent that starts at the gadget's entrance and has to go to the gadget's exit.
This ensures that all the agents passing through the gadget must use the same path within the gadget, thereby keeping the incoming and outgoing order of the traversing agents the same.
This property prevents clause agents from "cheating" and bypassing literal agents, thereby mimicking the monotone case.
The following lemma formally states the functionality of the blockers:

\begin{lemma} \label{lem:must-use-P}
Let \gadg{} be a variable gadget in $M'$.
Then, in any motion plan for $M'$ with a cost of $d^*(M')$, all the agents that traverse \gadg{} must take the same path through \gadg{}.
\end{lemma}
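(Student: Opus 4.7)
The plan is to prove Lemma~\ref{lem:must-use-P} by contradiction, leveraging three structural facts. First, since $\Pi$ has cost $d^*(M') = \sum_{r\in R} d(\src{r},\trg{r})$, every agent in $\Pi$ must move along a simple shortest path in $G$; combined with the gadget's obstacle pattern, which makes the entrance and exit single cells and forces the interior traversal paths to be single-file, this restricts the sub-path of every traversing agent through $\gadg$ to one of exactly two shortest entrance-to-exit paths, denoted $P_\text{top}$ and $P_\text{bot}$. Second, the blocker of $\gadg$ is itself such a traversing agent, so it too uses $P_\text{top}$ or $P_\text{bot}$; WLOG assume it uses $P_\text{top}$. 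Third, in any optimal plan the blocker stays at its exit target from the first moment it arrives there, because leaving and later returning would strictly lengthen its path and violate the shortest-path budget.

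Given these three facts, I would suppose for contradiction that some other traversing agent $a$ also uses $P_\text{top}$. Because the blocker occupies the entrance cell at time zero, $a$ can only enter $\gadg$ from time $1$ onward, so once both are on $P_\text{top}$ the blocker is strictly ahead of $a$ in the left-to-right order; since the lane is single-file and both move rightward, no overtaking is possible. Hence $a$ arrives at the exit strictly later than the blocker, but by the third fact the blocker is already permanently stationed there, contradicting collision-freeness. Therefore no other traversing agent shares $P_\text{top}$ with the blocker, so every other traversing agent takes $P_\text{bot}$, i.e., they all use the same path through $\gadg$, as required. The case where the blocker uses $P_\text{bot}$ is symmetric.

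The only delicate step is verifying that exactly two shortest entrance-to-exit paths thread $\gadg$, namely $P_\text{top}$ and $P_\text{bot}$, which should follow by direct inspection of the gadget's obstacle layout. The remainder of the argument is a short order-preservation statement on a single-file lane, and I do not anticipate a real obstacle in formalizing it; the only bookkeeping to be careful about is using the train-like motion model correctly so that ``strictly behind'' is maintained not only between consecutive time steps but throughout the entire traversal.
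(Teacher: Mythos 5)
Your proposal is correct and follows essentially the same route as the paper's proof: both reduce the gadget traversal to a choice between the two shortest entrance-to-exit paths, observe that the blocker occupies one of them, and conclude that no traversing agent can share the blocker's path without an unavoidable collision, leaving a single common path for everyone else. The only difference is that you spell out the collision step (order preservation on a single-file lane plus the blocker parking permanently at the exit) in more detail than the paper, which simply asserts that bypassing the blocker on the same path is impossible.
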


\begin{proof}
    Let $b$ denote the blocker agent that is initially at \gadg{}'s entrance.
    Since $P(b)$ is an optimal path, it can be either the top path or the bottom path in \gadg{}, which we denote by $P_1$ and $P_2$, respectively; see Figure~\ref{fig:non-monotone-lem}.
    Similarly, any agent $r$ that traverses \gadg{}, must have either $P_1$ or $P_2$ be a subpath of its path, $P(r)$.
    However, we cannot have $P(b)$ be a subpath of $P(r)$, as that would necessarily lead to a collision between $b$ and $r$.
    That is, $r$ must somehow bypass $b$ to exit \gadg{}, which is not possible if they take the same path in \gadg{}. %
    This leaves $r$ with exactly one path that it can take through \gadg{}, namely, the one not taken by $b$.
    Since this applies to any agent $r$, we have all the agents traversing \gadg{} take the same path through \gadg{}, as required.
\end{proof}

\begin{figure}[H]
\centering
\includegraphics[width=0.18\textwidth]{}
\caption{
A variable gadget at some time point during a motion plan.
The gadget's respective blocker agent $b$ is at its start position and its target position is on the right. The agent $r$ needs to traverse the gadget. The two possible shortest paths for $b$ are shown as $P_1$ (blue) and $P_2$ (red). Note that these paths overlap near \src{b} and \trg{b}. Here we highlight that $b$ and $r$ cannot use the same path in the gadget.}
\label{fig:non-monotone-lem}
\end{figure}

We now prove the correctness of the modified construction.

\begin{theorem} \label{thm:non-monotone}
$M'$ has a motion plan with a distance cost of $d^*(M')$ if and only if $\phi$ is satisfiable.
\end{theorem}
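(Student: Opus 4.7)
The plan is to follow the structure of Theorem~\ref{thm:monotone}, using Lemma~\ref{lem:must-use-P} to bridge the gap between monotone and non-monotone plans.

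For the forward direction, suppose $\phi$ has a satisfying assignment $\mathcal{A}$, and define $\poslit$, $\neglit$, and the path $P$ exactly as in the proof of Theorem~\ref{thm:monotone}. In each variable gadget $\gadg$, $P$ uses either the top or the bottom path; I designate the blocker of $\gadg$ to take the opposite path, which is a shortest choice and is consistent with Lemma~\ref{lem:must-use-P}. The new difficulty is that the blocker's designated path initially contains the $\poslit$ literal agents of $\gadg$'s variable that start on that side. Since these agents have to leave $\gadg$ on their way to clause targets anyway, I move them ahead of the blocker as a ``convoy'' along the blocker's path. Globally, the motion plan then proceeds in stages mirroring the three stages of the monotone plan, with an added convoy phase up front: process the variable gadgets (e.g., from right to left) to execute each convoy, parking the displaced $\poslit$ agents temporarily in the spare empty cells visible in Figure~\ref{fig:non-monotone}; then run the three monotone stages (move $\neglit$ along $P$, then the clause agents along $P$, then the remaining $\poslit$ motion to the clause targets). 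A direct count confirms each agent traverses exactly a shortest path, so the total cost equals $d^*(M')$.

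For the backward direction, suppose $M'$ has a motion plan of cost $d^*(M')$. By Lemma~\ref{lem:must-use-P}, in every variable gadget $\gadg$ for variable $\alpha$, every non-blocker agent that traverses $\gadg$ uses the path opposite to the blocker's path. I define the assignment $\mathcal{A}$ by setting $\alpha = T$ iff the blocker in $\gadg$ takes the top path (the one initially occupied by the positive literals of $\alpha$). With this definition, the non-blocker traversal path in each $\gadg$ coincides exactly with the choice the monotone forward construction would have made under $\mathcal{A}$; in particular, it passes through the starting positions of the literal agents deemed false by $\mathcal{A}$. I then replay the monotone backward argument: for each clause $C$, the clause agent $c$ must reach its target via a shortest path, which forces some literal target cell $v$ in $C$'s clause gadget middle row to be empty when $c$ passes. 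The corresponding literal agent $r$ has not yet reached $v$, and combining the path constraints from Lemma~\ref{lem:must-use-P} with the requirement that $r$ also takes a shortest path shows that the literal at $v$ evaluates to true under $\mathcal{A}$, so $C$ is satisfied.

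The main obstacle I expect is making the forward-direction schedule fully rigorous. One must specify the convoy order and precise timings so that every agent moves along a shortest path, no two agents ever meet at the same vertex or traverse the same edge in opposite directions simultaneously, and later-moving agents (especially the clause agents in the third stage) find the paths they need unobstructed. This is where the extra empty cells built into the gadget design play a crucial role, as they supply the ``parking'' space for $\poslit$ agents displaced by the convoy that had no counterpart in the monotone proof. By contrast, the backward direction is largely a mechanical adaptation of Theorem~\ref{thm:monotone}'s backward direction, once Lemma~\ref{lem:must-use-P} has been invoked to serialize traversal through each gadget.
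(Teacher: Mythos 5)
Your backward direction is essentially the paper's argument: define the assignment from the blockers' (equivalently, the traversal) paths via Lemma~\ref{lem:must-use-P}, then argue that the unoccupied target cell a clause agent passes through must belong to a true literal. You do gloss over the crux of that last step --- the paper makes it a contradiction: if the empty target belonged to some $r \in \neglit$, then $P(r)$ would be a subpath of $P(c)$, so $r$ would necessarily reach $\trg{r}$ \emph{before} $c$ arrives there and block it --- but the skeleton is the same.

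The forward direction, however, has a genuine gap, and it is exactly the part you flag as the ``main obstacle.'' You assume the blocker must physically traverse its designated path early, which forces you to displace the \poslit{} agents sitting on that path as a ``convoy'' and park them in spare cells. This does not work as described: if the blocker goes all the way to \gadg's exit, the displaced agents must be pushed \emph{out} of the gadget (there are not enough spare cells inside \gadg{} for an arbitrary number of literal agents, and any agent left on the blocker's path prevents it from finishing). Pushed-out \poslit{} agents then either sit on $P$ in a neighboring gadget, blocking the \neglit{} and clause agents, or travel on toward the clause gadgets, where prematurely fulfilled \poslit{} targets destroy the clause agents' shortest paths --- the very mechanism the reduction relies on. The paper's schedule avoids the convoy entirely: in a first stage each blocker moves only to a designated \emph{intermediate stop}, an empty cell just off $P$ at the start of the non-$P$ path (the crosses in Figure~\ref{fig:non-monotone}), reachable without displacing anyone; the three monotone stages of Theorem~\ref{thm:monotone} then run verbatim; and only in a \emph{final} stage, after the \poslit{} agents have vacated the non-$P$ path on their own, does each blocker complete its (still shortest) path to the exit. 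The missing idea in your proposal is that the blocker may wait mid-path until the agents ahead of it have left of their own accord, so no agent ever needs to be displaced.
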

\begin{proof}
We adjust the motion plan defined for $M$ in the proof of Theorem~\ref{thm:monotone} to accommodate the blocker agents.
Let $\mathcal{A}$ be a satisfying assignment and let $P$ be as defined before.
The new plan has two additional stages.
First, blocker agents move to intermediate stops that are not on $P$ (in arbitrary order).
Figure~\ref{fig:non-monotone} indicates the two possible intermediate stops for each blocker agent using crosses and Figure~\ref{fig:non-monotone-stage1} shows an example of the blockers' positions after the first stage.
Next, we perform the same motion plan as for $M$.
Then, in the final stage, blockers move to their targets (again in arbitrary order, since each assignment only contains the blocker at this stage).
It is easy to verify that the new plan is optimal: The intermediate stops always allow blocker agents to not block $P$, while also allowing them to eventually reach their targets using optimal paths.
As for the rest of the agents, each agent is able to take the same path as in Theorem~\ref{thm:monotone}.

For the other direction, let us assume that there is a motion plan for $M'$ with a cost of $d^*(M')$.
By Lemma~\ref{lem:must-use-P}, all the clause agents follow the same path in each variable gadget.
Therefore, let $P$ denote the path that all the clause agents follow between the entrance of the leftmost variable gadget to the exit of the rightmost variable gadget.
We define a satisfying assignment $\A$ using $P$ as follows:
A variable $\alpha \in \phi$ is assigned to be true (resp. false) if $P$ passes though start positions of negative (resp. positive) literal agents in $\alpha$'s variable gadget.
In other words, literals corresponding to literal agents that are initially located on $P$ are assigned to be false (see Figure~\ref{fig:non-monotone-stage1} for an example of the correspondence between $P$ and $\A$).
As before, let \poslit{} (resp. \neglit{}) be the set of agents corresponding to literals that evaluate to true (resp. false) according to $\mathcal{A}$.

Let $C$ be a clause in $\phi$ and let $c$ be the corresponding clause agent, i.e., $c$ has to go to $C$'s clause gadget.
We show that $C$ is satisfied by $\A$.
It suffices to show that $c$'s path, $P(c)$, contains a target of an agent in \poslit{} in $C$'s clause gadget.
Let us assume for a contradiction that this does not hold.
Then, since $P(c)$ is $c$'s individually optimal path, it must still pass through a target in $C$'s clause gadget.
This target must be $\trg{r}$ of some $r \in \neglit{}$.
We will now claim that $P(r)$ must be a subpath of $P(c)$.
Intuitively, this means that $c$ cannot bypass $r$, and will ultimately be blocked by $r$ once $r$ reaches $\trg{r}$, thus yielding the contradiction.

Let \gadg{} be the variable gadget on which \src{r} lies.
By definition, \src{r} lies on $P$, so $r$ must follow $P$ to exit \gadg{} using the shortest path.
By Lemma~\ref{lem:must-use-P}, $r$ must continue following $P$, the subpath shared by all clause agents, in all variable gadgets it traverses (after leaving \gadg{}).
The remainder of $P(r)$ must also be a subpath of $P(c)$ since both paths are optimal and hence simply go right until reaching the cell below $\trg{r}$.
Therefore, $P(r)$ as a whole is a subpath of $P(c)$. This is a contradiction since then $r$ must reach $\trg{r}$ before $c$ does, which would block $c$.
In conclusion, we showed that $C$ is satisfied, which holds for any clause, and so we are done.
\end{proof}

Observe that all our arguments hold regardless of whether parallel motion is allowed or not.
For the case of synchronous rotations along cycles, by definition for such a rotation to occur, there has to be an agent moving left.
As none of the individually optimal paths for agents ever require moving left, rotations cannot occur for a plan with cost $d^*(M')$.
It is easy to verify that the number of agents in $M'$ as well as the size of the resulting graph remains linear in $|\phi|$.
Therefore, we conclude the following:

\begin{corollary}
Distance-Optimal MAPF is NP-hard and cannot be solved in sub-exponential time $2^{o(n)}$ or $2^{o(|V|)}$ unless ETH fails, even for a grid graph $G=(V, E)$ with 3 rows, where $n$ is the number of agents. This holds for both parallel and sequential motions.
\end{corollary}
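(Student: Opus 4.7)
The plan is to assemble the corollary from three ingredients already established: Theorem~\ref{thm:non-monotone}, a size analysis of the construction, and the ETH framework from Section~\ref{sec:ETH}. First I would observe that $\phi \mapsto M'$ is computable in polynomial time (indeed in linear time), and that by Theorem~\ref{thm:non-monotone}, $\phi$ is satisfiable iff $M'$ admits a motion plan of cost exactly $d^*(M')$. Since $d^*(M')$ is easily computable from $M'$ as $\sum_{r\in R} d(\src{r},\trg{r})$, this yields an honest many-one reduction from 3-SAT to Distance-Optimal MAPF, giving NP-hardness. The 3-row grid shape is immediate from the figure: every gadget occupies a constant-height strip, stacked three rows tall.

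Next I would verify the linearity of the reduction, which is the main bookkeeping step (and the only place where anything might fail). For a formula $\phi$ with $N$ variables and $M$ clauses, the construction places one variable gadget per variable, with a constant-width block plus two columns per occurrence of the variable; one clause gadget per clause, of constant size; and introduces one literal agent per literal occurrence, one clause agent per clause, and one blocker per variable gadget. Hence the number of agents satisfies $n = O(N+M) = O(|\phi|)$, and the grid $G=(V,E)$ has $|V| = O(|\phi|)$ as well, since each gadget contributes $O(1)$ cells aside from the column-per-occurrence stretch. With linearity in hand, the ETH argument recalled in Section~\ref{sec:ETH} applies verbatim: an algorithm solving Distance-Optimal MAPF in time $2^{o(n)}$ or $2^{o(|V|)}$ would, composed with the reduction, decide 3-SAT in time $2^{o(N+M)}$, contradicting the ETH.

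Finally, the parallel-vs-sequential claim is already handled by the paragraph immediately preceding the corollary, which I would simply invoke: in any plan of cost $d^*(M')$ every agent follows an individually shortest path, and every such path in our construction is $x$-monotone and never uses a leftward edge, so no synchronous rotation around a fully occupied cycle can arise. Hence the equivalence between $\phi$ being satisfiable and $M'$ admitting a cost-$d^*(M')$ plan is insensitive to the movement model, and both the NP-hardness and the ETH lower bound hold uniformly for parallel and sequential motions.

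The only real obstacle is the size audit in the second paragraph; if any gadget secretly scaled with $N\cdot M$ (for example if the clause gadgets needed one column per variable), the reduction would only be quadratic and the ETH bound would weaken to $2^{o(\sqrt{n})}$. A careful tally of columns per variable gadget (linear in the variable's occurrence count, not in $M$) and of columns per clause gadget (constant) is what rules this out and secures the sharp $2^{o(n)}$ lower bound.
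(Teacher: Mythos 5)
Your proposal is correct and follows essentially the same route as the paper: invoke Theorem~\ref{thm:non-monotone} for the correctness of the reduction, check that the number of agents and the size of the grid are linear in $|\phi|$, apply the ETH framework of Section~\ref{sec:ETH}, and dispose of the parallel/sequential distinction by observing that no individually optimal path ever moves left, so synchronous rotations cannot occur in a cost-$d^*(M')$ plan. The paper compresses all of this into the short paragraph immediately preceding the corollary; your explicit size audit is simply an expanded version of its ``it is easy to verify'' step.
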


\section{Conclusion}
We have shown that distance-optimal MAPF is NP-hard on grid graphs with more than one empty vertex, settling the open problem by Banfi et al.~\cite{DBLP:journals/ral/BanfiBA17}.
Before discussing possible implications of our proof towards positive results, we advocate for more focus on achieving refined hardness results. %
Specifically, we believe that when proving hardness, one should strive towards the following two goals:
The considered problem setting should be the most restricted, i.e., simplest, one that is still useful, while at the same time the hardness reduction should be kept simple as well, ideally with low blow-up.
We note that these two goals can sometimes collide (e.g., compare the elegant proof by Goldreich~\cite{Goldreich84} for general graphs versus the one for the grid~\cite{ratner1990n2}, which was simplified only close to 30 years later~\cite{DBLP:journals/tcs/DemaineR18}).
Nevertheless, we believe that we have homed in on both goals in this paper, thereby improving the structural understanding of MAPF.

\subsection{Implications of the hardness result}
The previous hardness proof for distance-optimal MAPF on planar graphs by Yu~\cite{YuPlanar} uses agents that need to move in opposite directions in order to emulate an assignment.
As a result, Yu concluded that the hardness of the problem appears to arise from contention that occurs when two or more groups of agents want to move in opposite directions through the same set of narrow paths. %
From a practical standpoint, Yu suggests that environments with many robots would benefit from a design that minimizes path sharing among the robots.

Since in our construction all the agents move in the same general direction, we show that hardness remains even without opposite direction movement.
In fact, we remark that our construction can be modified so that the problem is NP-hard even if agents can only move down and right. This requires two main modifications: The variable gadgets need to be arranged in a staircase-like manner, in which the exit of one gadget is on the same grid row as the entrance of its neighboring gadget. This modification eliminates the need for agents to go up between variable gadgets. The second modification is vertically mirroring the clause gadgets such that the clause agents' targets are on the bottom row each gadget.

Given that opposite direction movement does not play a role in our case, we provide another perspective for the source of difficulty of the problem.
For the purpose of this discussion, when we say that a target vertex $v$ is \emph{fulfilled}, we mean that the agent $r$ with $\trg{r}=v$ has reached $v$.
Our construction's challenging aspect is the need of agents to negotiate through paths with many start and target vertices of other agents.
This results in two conflicting goals:
On the one hand each agent needs to pass target positions along its path before they become fulfilled (assuming that once they become fulfilled, the agent will have to take a longer path).
This suggests that algorithms for distance-optimal MAPF can benefit from "prioritizing" agents that have targets along their optimal path that are close to becoming fulfilled.
At the same time, each agent should aim not to force other agents to move in a manner that fulfills targets that other agents still need to pass through.

\subsection{Future work}
Our discussion on implications of the hardness results calls for the investigation of more parameters that affect the hardness of the problem.
As we noted, agents in our construction have to pass through a large number of start and target positions of other agents.
A natural question is whether the problem remains hard even if each agent has an optimal path that passes through a constant number of start and target positions.
Another significant feature of our construction is that the agents' paths must largely overlap.
Therefore, the case where the paths can overlap less, which requires a different layout than the "long and narrow" grid that we used, seems worthy of more study.
Overall, we believe that more subtle underlying parameters need to be considered.
Previous positive results that employ parameterized complexity in discrete motion planning problems~\cite{DBLP:journals/jair/GuptaSZ20, DBLP:conf/soda/AgarwalAGH21} provide some encouragement.

While our refined analysis has resulted in a concrete lower bound, we are not aware of any algorithm that matches (or nearly matches) it, i.e., has a running time of $O(2^n)$ or $O(2^{|V|})$.
This brings to light a gap between lower and upper bounds, which we believe calls for additional refined analysis on both sides.
On the upper bound side such work was recently done by~\citet{gordon2021revisiting} for (time-optimal) Conflict-Based Search~\cite{DBLP:journals/ai/SharonSFS15}, which tightened the running time of the algorithm.
Their improved bound is exponential in a few parameters, therefore it could be beneficial to simultaneously analyze multiple parameters on the lower bound side.
We also remark here that existing hardness results for time-optimal MAPF on planar graphs~\cite{YuPlanar} and 2D grid graphs~\cite{DBLP:journals/ral/BanfiBA17, DBLP:journals/siamcomp/DemaineFKMS19} use reductions that are not linear. 
Hopefully, tightening both lower and upper bounds will uncover areas for algorithmic improvements.

\begin{acks}
This work has been supported in part by the Israel Science
Foundation (grant no.~1736/19),
by NSF/US-Israel-BSF (grant no.~2019754),
by the Israel Ministry of Science and Technology (grant no.~103129),
by the Blavatnik Computer Science Research Fund, and by
the Yandex Machine Learning Initiative for Machine Learning
at Tel Aviv University.
\end{acks}

\bibliographystyle{ACM-Reference-Format} 
\balance
\bibliography{references}

\end{document}